\newtheorem{theorem}{Theorem}
\title{Secrecy Capacity of the Gaussian Wire-Tap Channel with Finite Complex Constellation Input}
\begin{document}

\author{
\authorblockN{G. D. Raghava}
\authorblockA{Dept. of ECE, Indian Institute of Science \\
Bangalore 560012, India\\
Email:raghava@ece.iisc.ernet.in
}
\and
\authorblockN{B. Sundar Rajan}
\authorblockA{Dept. of ECE, Indian Institute of Science, \\Bangalore 560012, India\\
Email: bsrajan@ece.iisc.ernet.in
}
}

\maketitle
\thispagestyle{empty}	

\begin{abstract}
The secrecy capacity of a discrete memoryless Gaussian Wire-Tap Channel when the input is from a finite complex constellation is studied. It is shown that the secrecy capacity curves of a finite constellation plotted against the SNR, for a fixed noise variance of the eavesdropper's channel has a \textit{global maximum} at an internal point. This is in contrast to what is known in the case of Gaussian codebook input where the secrecy capacity curve is a bounded, monotonically increasing function of SNR. Secrecy capacity curves for some well known constellations like BPSK, 4-QAM, 16-QAM and 8-PSK are plotted and the SNR at which the maximum occurs is found through simulation. It is conjectured that the secrecy capacity curves for finite constellations have a single maximum.  
\end{abstract}	

\section{Introduction}
Security is one of the most important issues in communication, and more so in wireless communications where the inherent broadcast nature of the wireless medium makes the data more susceptible to eavesdropping. The conventional techniques to secure data from wire-tappers are based on cryptographic techniques, where legitimate parties share a secret key. Shannon, in \cite{Sha}, analyzed such a system and showed that to achieve perfect secrecy the conditional probability of the \textit{encrypted data given a message} must be independent of the actual transmitted message. In \cite{Wyn}, Wyner applied this concept to the discrete memoryless channel with a wire-tapper where the wire-tapper's signal is a degraded version of the legitimate receiver's signal. His measure of secrecy is the \textit{normalized equivocation rate}, denoted by $\delta$, as seen by the wire-tapper. The rate-equivocation pairs ($R,\delta$) ($R$ is the rate of information transmission to the legitimate receiver in bits$/$channel use) were determined and the existence of positive \textit{secrecy rates, $C_s$}, for communication below which it is possible to limit the rate of information leaked to the wire-tapper to arbitrarily small values, was shown. In \cite{LH}, authors specialized Wyner's results to Gaussian Wire-Tap Channel (WTC) and showed that the secrecy capacity is the difference in the Gaussian capacities of the main channel and the eavesdropper's channel. Generalizations to the WTC were studied by Csis\'{a}r and K\"{o}rner in \cite{CK}, in which the Broadcast Channel was assumed to be more general (not necessarily degraded) and the sender wishes to transmit common information to both the legitimate receiver and the wire-tapper, in addition to the secret information to the legitimate receiver. Several other related studies have been reported on WTC$/$Broadcast Channel, Relay channel with confidential messages, Interference channel with confidential message, MIMO Gaussian wiretap channel etc \cite{LH}-\cite{LMYS}.

Coding for WTCs has been studied by several authors. In \cite{Wei}, Wei showed how to encode secret information using cosets of certain linear block codes. In \cite{TDC} Thangaraj et al. and in \cite{LLPS} Liu et al., extended this idea and showed that  low-density parity-check codes (LDPC) can asymptotically achieve the secrecy capacity of a erasure WTC and also showed how this coding method can be extended to achieve secrecy rates below the secrecy capacity for other channels. In \cite{KHM}, Demijan Klinc constructed LDPC codes for the Gaussian WTC. By transmitting the messages over the punctured bits, to hide it from the eavesdropper, they showed that whenever the eavesdropper's $SNR$ is below the $SNR$ threshold of the main channel (the gap which they call the  \textit{'security gap'}) BER at the eavesdropper increases to 0.5 within few iterations (BER at the eavesdropper being their measure of secrecy). Recently, construction of codes for the Gaussian WTC based on Lattices have also been reported \cite{BO}.

The secrecy capacity of a \textit{Gaussian WTC} was derived in  \cite{LH} and was shown that the Gaussian codebook achieves capacity. Though the secrecy capacity versus $SNR$ (at the legitimate receiver) plot gives us an idea of the achievable rates under various noise conditions of the eavesdropper's channel it fails to tell us what happens when finite-input alphabet restrictions are applied. The effect of finite complex constellation on the capacity has already been studied for Gaussian Multiple Access Channels\cite{Har}, Gaussian Broadcast channels\cite{Nav}, and Interference Channels\cite{FA}. In this paper, we study the effect of finite-input alphabet on the secrecy capacity of a Gaussian WTC.

The setting in this paper is a Gaussian WTC with the sender using finite complex constellation and the elements are chosen uniformly. Capacity calculations under the assumption that the input elements are chosen uniformly from a finite complex constellation is referred to as the \textit{Constellation Constrained (CC) Capacity} \cite{Big}. Throughout the paper, we refer to the secrecy capacity of a Gaussian WTC, calculated under the above assumptions, as \textit{Constellation Constrained Secrecy Capacity (CC-SC)}. Unquantized version of the received signal is assumed throughout the paper.

The main contributions of this paper are as follows:
\begin{itemize}
\item 
Analytically we show that the \textit{CC-SC} curves plotted against the $SNR$ of the main channel for a fixed noise variance of the eavesdropper's channel has a global maximum unlike that with the Gaussian codebook input.\\
\item 
We conjecture from the plots that the secrecy capacity curves for finite constellations have a single maximum. The $SNR$ at which the maximum occurs and the rate at this $SNR$ are found through simulations for some well known constellations like BPSK, 4-QAM, 16-QAM and 8-PSK.\\
\end{itemize}
This analysis, we hope, can be a guideline for practical code designers intending to code for the Gaussian WTC, to decide on the SNR at which to operate.\\

\textit{\textbf{Notations:}} For a random variable $X$ which takes values from the set $\cal S$, we assume some ordering of its elements and use $s^i$ to represent the $i$-th element of $S$. i.e. $s^i$ represents a value of the realization of the random variable $X$. Absolute value of a complex number $x$ is denoted by $|x|$ and $E[X]$ denotes the expectation of the random variable $X$. $H(X)$ denotes the entropy of a discrete random variable $X$ and $h(X)$ denotes the differential entropy of a continuous random variable $X$. For a complex random variable $N$, $N \sim {\cal CN}(0,{\sigma^2})$ denotes that $N$ has a circularly symmetric complex normal distribution with mean 0 and variance $\sigma^2$.\\

\section{System Model and Secrecy Capacity Calculations from Mutual Information}
\label{sec1}

The model of a Gaussian WTC is as shown in Fig \ref{fig:AWGNWT}. The sender wishes to convey information to the legitimate receiver in the presence of an eavesdropper. The goal of the sender here is, not just to convey information reliably to the legitimate receiver but also to perfectly secure the data from the eavesdropper. Sender is equipped with a finite complex constellation $\cal S'$ of size $M$ with power constraint $P_o$. When the sender transmits symbol $\tilde{X}$ from $\cal S'$, the received symbols $\tilde{Y}$ and $\tilde{Z}$ at the legitimate receiver and the eavesdropper respectively, are given by
\begin {align}
\label{eqnset1a}
\tilde{Y} = \tilde{X} + \tilde{N_1}\\
\label{eqnset1b}
\tilde{Z} = \tilde{X} + \tilde{N_2}
\end {align}
where $\tilde{X} \in \cal S'$, $\tilde{N_1} \sim {\cal CN}(0,{\sigma_1^2})$, $\tilde{N_2} \sim {\cal CN}(0,{\sigma_2^2})$.

By scaling the random variable $\tilde{X}$ by $\left(1/ \sqrt{P_o}\right)$, $\tilde{Y}$ and $\tilde{Z}$ by $\left(1/\sigma_1\right)$, the set of equations (\ref{eqnset1a}) and (\ref{eqnset1b}) can be equivalently represented as,   
\begin {align*}
Y = \sqrt{SNR}~X + N_1\\
Z = \sqrt{SNR}~X + N_2
\end {align*}
where $SNR$ = $\left(P_o/\sigma_1^2\right)$, $Y$ = $\left(1/\sigma_1\right)\tilde{Y}$, $Z$ = $\left(1/\sigma_1\right)\tilde{Z}$, $X\in\cal S$ (= normalized $\cal S'$), $N_1 \sim {\cal CN}(0,1)$, $N_2 \sim {\cal CN}(0,\sigma^2 \equiv \frac{\sigma_2^2}{\sigma_1^2})$.\\

\begin{figure}[htbp]
\centering
\includegraphics[totalheight=1in,width=2in]{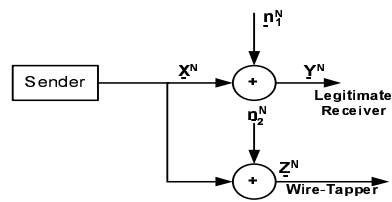}
\caption{AWGN-WT Model}	
\label{fig:AWGNWT}	
\end{figure}

The secrecy capacity $\mathcal{C}_s$ for a general wire tap channel\cite{CK} is given by 
\begin {equation*}
\mathcal{C}_s = \max_{V\rightarrow X\rightarrow (Y,Z)} [I(V ; Y ) - I(V ;Z)] 
\end {equation*}
where the maximum is over all possible random variables V in joint distribution with $X$, $Y$ and $Z$ such that $V\rightarrow X\rightarrow (Y,Z)$ is a Markov chain. The random variable V does not have a direct physical meaning; it is used
only for the calculation purpose. Calculation of secrecy capacity for a general discrete memoryless WTC is difficult and is still unsolved. However, by imposing certain restrictions on the characteristics of the main channel and the eavesdropper's channel, secrecy capacity can be calculated. In \cite{CK} it is shown that whenever the main channel is \textit{"less noisy"} compared to the eavesdropper's channel \cite{CK}, secrecy capacity is always positive and the calculation of which simplifies to
\begin {equation*}
\mathcal{C}_s = \max_{P_X(x)} [I(X ; Y ) - I(X ;Z)] 
\end {equation*}
where the maximum is over all possible distributions $P_X(x)$ of X.
We specialize our setting to the case of \textit{"less noisy main channel"} by imposing the restriction of  $\sigma_2^2 > \sigma_1^2$. Also, for the finite constellation setting since the distribution $P_X(x)$ of X is assumed to be uniform \cite{Big}, secrecy capacity calculation further simplifies to just the difference in the mutual informations of the main and the eavesdropper's channel. This secrecy capacity, which we refer to as \textit{CC-SC} is calculated as
\begin {equation}
\label{eqnset3g}
\mathcal{C}_s^{cc} = [I(X ; Y ) - I(X ;Z)]. 
\end {equation}
The mutual informations $I(X ; Y )$ and $I(X ;Z)$ are calculated as follows
\begin {equation}
\label{eqnset3h}
I(X ; Y ) = h(Y) - h(Y\vert X), \\
I(X ; Z ) = h(Z) - h(Z\vert X).
\end {equation}
Since X is assumed to be uniformly distributed and $N_1$ is Gaussian distributed,
$p(y \vert x = x^i), ~ p(y)$ and $h(Y\vert X)$ are given by,
\begin {align}
\label{eqnset3i_a}
&p(y \vert x = x^i) = \frac{1}{\pi} \exp{\left(-|y - \left(\sqrt{SNR}\right)x^i|^2\right)},\\
\label{eqnset3i_b}
&~~~h(Y\vert X) ~=~ \frac {1}{M} \sum_{i=0}^{i=M-1}h(Y\vert X=x^i) \nonumber\\ 
&~~~~~~~~~~~~~~= ~h(N_1) ~= ~\log_2 {\pi e}, \\
\label{eqnset3i_c}
&~~~~~~~~~p(y) = \frac {1}{M}\sum_{i=0}^{i=M-1}p(y \vert x = x^i). 
\end {align}
Similarly, the expressions for $p(z\vert x=x^i), ~ p(z)$ and $h(Z\vert X)$  are obtained as,
\begin {align}
\label{eqnset3j_a}
&p(z \vert x = x^i) = \frac{1}{\pi \sigma^2} \exp{\left(- \frac {|z - \left(\sqrt{SNR}\right)x^i|^2}{\sigma^2}\right)}, \\ 
\label{eqnset3j_b}
&~~~~h(Z\vert X) = \frac {1}{M} \sum_{i=0}^{i=M-1}h(Z\vert X=x^i)\notag\\
&~~~~~~~~~~~~~= ~h(N_2) ~= ~\log_2 {\pi e\sigma^2},  \\
\label{eqnset3j_c}
&~~~~~~~~p(z) = \frac {1}{M}\sum_{i=0}^{i=M-1}p(z\vert x = x^i). 
\end {align}
Using (\ref{eqnset3i_a}), (\ref{eqnset3i_b}), (\ref{eqnset3i_c}), and (\ref{eqnset3j_a}), (\ref{eqnset3j_b}), (\ref{eqnset3j_c}), expressions for $I(X;Y)$ and $I(X;Z)$ are found and are given in (12) and (13), at the top of the next page. Taking the difference of these two mutual informations we get an expression for the \textit{CC-SC} $\mathcal{C}_s^{cc}$ and is given in (14).\\  

\begin{figure*}
\centering
\begin {align*}
\label{eqnset3k}
& h(Y) ~~~=-\int_y\!p(y)\log_2{p(y)}\ dy \\
& ~~~~~~~~~~=~\log_2 M + \log_2{\pi}- \frac{1}{M}\sum_{i=0}^{i=M-1}E_{N_1}\left[\log_2{\left(\sum_{j=0}^{j=M-1}\exp{-\left(|N_1+\sqrt{SNR}\left(x^i-x^j\right)|^2\right)}\right)}\right]\\
&I(X;Y) = h(Y) - \log_2 {\pi e}\\
& ~~~~~~~~~~=~\log_2{\left(\frac{M}{e}\right)}-\frac{1}{M}\sum_{i=0}^{i=M-1}E_{N_1}\left[\log_2{\left(\sum_{j=0}^{j=M-1}\exp{-\left(|N_1+\sqrt{SNR}\left(x^i-x^j\right)|^2\right)}\right)}\right]  \tag{12} \\
& h(Z) ~~~= -\int_z\!p(z)\log_2{p(z)}\ dz \\
&~~~~~~~~~~=~\log_2 M + \log_2{\pi \sigma^2}- \frac{1}{M}\sum_{i=0}^{i=M-1}E_{N_2}\left[\log_2{\left(\sum_{j=0}^{j=M-1}\exp{-\left(\frac{|N_2+\sqrt{SNR}\left(x^i-x^j\right)|^2}{\sigma^2}\right)}\right)}\right]\\
&I(X;Z) = h(Z) - \log_2 {\pi e\sigma_2^2}\\
& ~~~~~~~~~~= ~\log_2{\left(\frac{M}{e}\right)}- \frac{1}{M}\sum_{i=0}^{i=M-1}E_{N_2}\left[\log_2{\left(\sum_{j=0}^{j=M-1}\exp{-\left(\frac{|N_2+\sqrt{SNR}\left(x^i-x^j\right)|^2}{\sigma^2}\right)}\right)}\right]  \tag{13} \\
&\mathcal{C}_s^{cc}~~~~~~= I(X;Y) - I(X;Z)\\
&~~~~~~~~~~=\frac{1}{M}\sum_{i=0}^{i=M-1}E_{N_2}\left[\log_2{\left(\sum_{j=0}^{j=M-1}\exp{-\left(\frac{|N_2+\sqrt{SNR}\left(x^i-x^j\right)|^2}{\sigma^2}\right)}\right)}\right]\\
&~~~~~~~~~~~~~~~~~~~~~~~~~~~~~~~~~~~~~~~~~~~~~~~~~~~-E_{N_1}\left[\log_2{\left(\sum_{k=0}^{k=M-1}\exp{-\left(|N_1+\sqrt{SNR}\left(x^i-x^k\right)|^2\right)}\right)}\right]\tag{14} \\
\end{align*}
\hrule
\end{figure*}

\begin{theorem}
\label{thm1}
The plot of \textit{CC-SC} $\mathcal{C}_s^{cc}$ versus $SNR$ of the main channel for a fixed noise variance of the eavesdropper's channel has a \textit{Global maximum} at an $SNR < \infty$.
\end{theorem}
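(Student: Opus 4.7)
The plan is to argue that $\mathcal{C}_s^{cc}(\mathrm{SNR})$ is a continuous function of $\mathrm{SNR}$ on $[0,\infty)$ that vanishes at the two endpoints but is strictly positive somewhere in between, so that a compactness argument forces the supremum to be attained at an interior point of a finite interval.

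\textbf{Step 1: Boundary behavior at }$\mathrm{SNR}=0$. Setting $\mathrm{SNR}=0$ in the channel equations makes $Y=N_1$ and $Z=N_2$, both independent of $X$. Hence $I(X;Y)=I(X;Z)=0$ and $\mathcal{C}_s^{cc}(0)=0$. This can also be read off directly from the closed-form expression (14): with $\mathrm{SNR}=0$ each inner sum collapses to $M\exp(-|N_k|^2)$, the two $E_{N_k}[\,\cdot\,]$ terms become equal, and the difference is identically zero.

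\textbf{Step 2: Limit as }$\mathrm{SNR}\to\infty$. For a \emph{finite} constellation $\mathcal{S}$, both $I(X;Y)$ and $I(X;Z)$ are bounded above by $H(X)=\log_2 M$. As $\mathrm{SNR}\to\infty$, the Gaussian noise becomes negligible compared with the minimum distance $d_{\min}$ of $\sqrt{\mathrm{SNR}}\cdot\mathcal{S}$, so $X$ is recoverable from $Y$ (and from $Z$) with vanishing error probability; by Fano's inequality both mutual informations tend to $\log_2 M$. Therefore
\[
\lim_{\mathrm{SNR}\to\infty}\mathcal{C}_s^{cc}(\mathrm{SNR})=\log_2 M-\log_2 M=0.
\]
Equivalently, in (14) one shows by dominated convergence that as $\mathrm{SNR}\to\infty$ the term with $j=i$ (resp.\ $k=i$) dominates inside each logarithm, both expectations converge to $-\log_2 e$ times a common constant, and their difference vanishes.

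\textbf{Step 3: Positivity at some interior }$\mathrm{SNR}$. Since we have restricted to $\sigma^2>1$, the main channel is strictly less noisy than the eavesdropper's channel in the sense of \cite{CK}; equivalently, for the symmetric Gaussian setting with a uniform finite input and any $\mathrm{SNR}>0$ one has $I(X;Y)>I(X;Z)$ strictly, because the additive Gaussian channel with smaller noise variance is a degraded version of the one with larger variance. Thus $\mathcal{C}_s^{cc}(\mathrm{SNR})>0$ for every finite $\mathrm{SNR}>0$.

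\textbf{Step 4: Continuity and conclusion.} The integrands in (14) are smooth in $\mathrm{SNR}$ and dominated by integrable envelopes (polynomial times Gaussian), so differentiation/continuity under the integral and expectation is justified and $\mathcal{C}_s^{cc}(\cdot)$ is continuous on $[0,\infty)$. Picking any $\mathrm{SNR}_0>0$ with $c_0:=\mathcal{C}_s^{cc}(\mathrm{SNR}_0)>0$ (Step 3), Step 2 yields some $T<\infty$ with $\mathcal{C}_s^{cc}(\mathrm{SNR})<c_0/2$ for all $\mathrm{SNR}>T$; Step 1 does the same near $0$. By continuity on the compact interval $[0,T]$, $\mathcal{C}_s^{cc}$ attains its maximum there, and that maximum is at least $c_0$, hence it is the global maximum on $[0,\infty)$ and is attained at some $\mathrm{SNR}^\ast\in(0,T)$.

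\textbf{Main obstacle.} The non-trivial step is Step 2: making rigorous the statement that $I(X;Y)\to\log_2 M$ \emph{and} that the difference of the two mutual informations goes to zero, rather than to a positive constant. The Fano-style argument is clean but hides a rate issue; the safer route is to work directly with expression (14), substitute $u=N_2/\sqrt{\mathrm{SNR}}$ (and similarly for $N_1$), and use dominated convergence to show that both expectation terms in (14) tend to $-\log_2 e$, so that their difference vanishes. Once this limit is secured, the remaining continuity/compactness argument is routine.
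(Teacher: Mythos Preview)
Your proof is correct and follows essentially the same approach as the paper's: show that $\mathcal{C}_s^{cc}$ vanishes at $\mathrm{SNR}=0$ and as $\mathrm{SNR}\to\infty$, is strictly positive and continuous in between, and then invoke a compactness/maximum-principle argument. Your version is more carefully justified---the paper simply asserts the two boundary values, positivity, and continuity, then cites the maximum principle---but the logical skeleton is identical.
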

\begin{proof}

It is easy to see from (14) that for $SNR = 0, ~~ \mathcal{C}_s^{cc} = 0$. Also note that both $I(X;Y)$ and  $I(X ; Z)$ tends to $\log_2 M$ as  $SNR$ tends to $\infty$. Therefore,\\
\begin {align*}
\label{eqnset3m}
& \lim_{SNR\rightarrow \infty} \mathcal{C}_s^{cc} = \lim_{SNR\rightarrow \infty}[I(X ; Y ) - I(X ;Z)] \\
&~~~~~~~~~~~~~~= \log_2 M - \log_2 M = 0.
\end {align*}
Now since  $\mathcal{C}_s^{cc}$ is a bounded positive continuous function of $SNR$ we know from the maximum principle\cite{TT} that there exists a point $SNR_{max}$ less than $\infty$ on the $SNR$ axis at which $\mathcal{C}_s^{cc}$ attains a global maximum.
\end{proof}

Thus, from Theorem \ref{thm1}, \textit{CC-SC} function has a global maximum at a point less than $\infty$. This is in contrast to what happens when a Gaussian code book is used by the sender\cite{LH}. Let $\mathcal{C}_s^{GC}$ represent the secrecy capacity for the Gaussian WTC which, as shown in \cite{LH}, is achieved by using Gaussian codebook at the sender.  
\begin{align*}
&~~~~\mathcal{C}_s^{GC} = \log_2 {\left(1+\frac{P_o}{\sigma_1^2} \right)} - \log_2 {\left(1+\frac{P_o}{\sigma_2^2}\right)} \\ 
&~~~~~~~~~=\log_2{\frac{P_o+\sigma_1^2}{P_o + \sigma_2^2}} + \log_2{\frac{\sigma_2^2}{\sigma_1^2}}&~~~~~~~~~~\\ 
&As~P_o\rightarrow \infty, ~ we ~ have,\\
&~~~~~~~~~~~~\mathcal{C}_s^{GC} \rightarrow \log_2{\frac{\sigma_2^2}{\sigma_1^2}}
\end{align*}
Since $\log_2(.)$ is a monotonically increasing function of its argument, $\mathcal{C}_s^{GC}$ increases monotonically as $SNR$ increases and converges to
 $\log_2{\sigma_2^2 / \sigma_1^2}$. Thus the behavior of the secret capacity when finite constellation is used, is different from what happens when the Gaussian codebook is used. This is a very crucial point, which we feel is important and is missing in the $\mathcal{C}_s^{GC}$ plot, to consider when constructing practical codes for the Gaussian WTC. 

\section{ CC-SC Plots for Some Known Constellations }
\label{sec2}

\begin{figure}[htbp]
\centering
\includegraphics[totalheight=3in,width=3.15in]{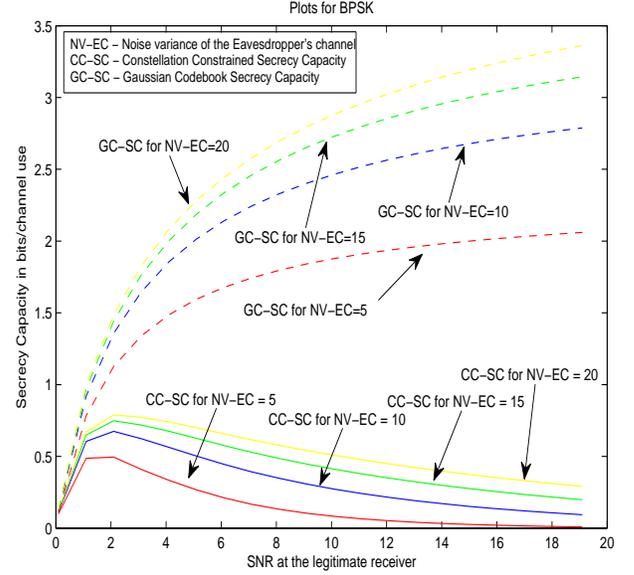}
\caption{CC Capacity curves when BPSK is used and for varying noise variances of the eavesdropper's channel. Curves are plotted for noise variances of 5, 10, 15 and 20}	
\label{fig:BPSK}	
\end{figure}

\begin{figure}[htbp]
\centering
\includegraphics[totalheight=3in,width=3.15in]{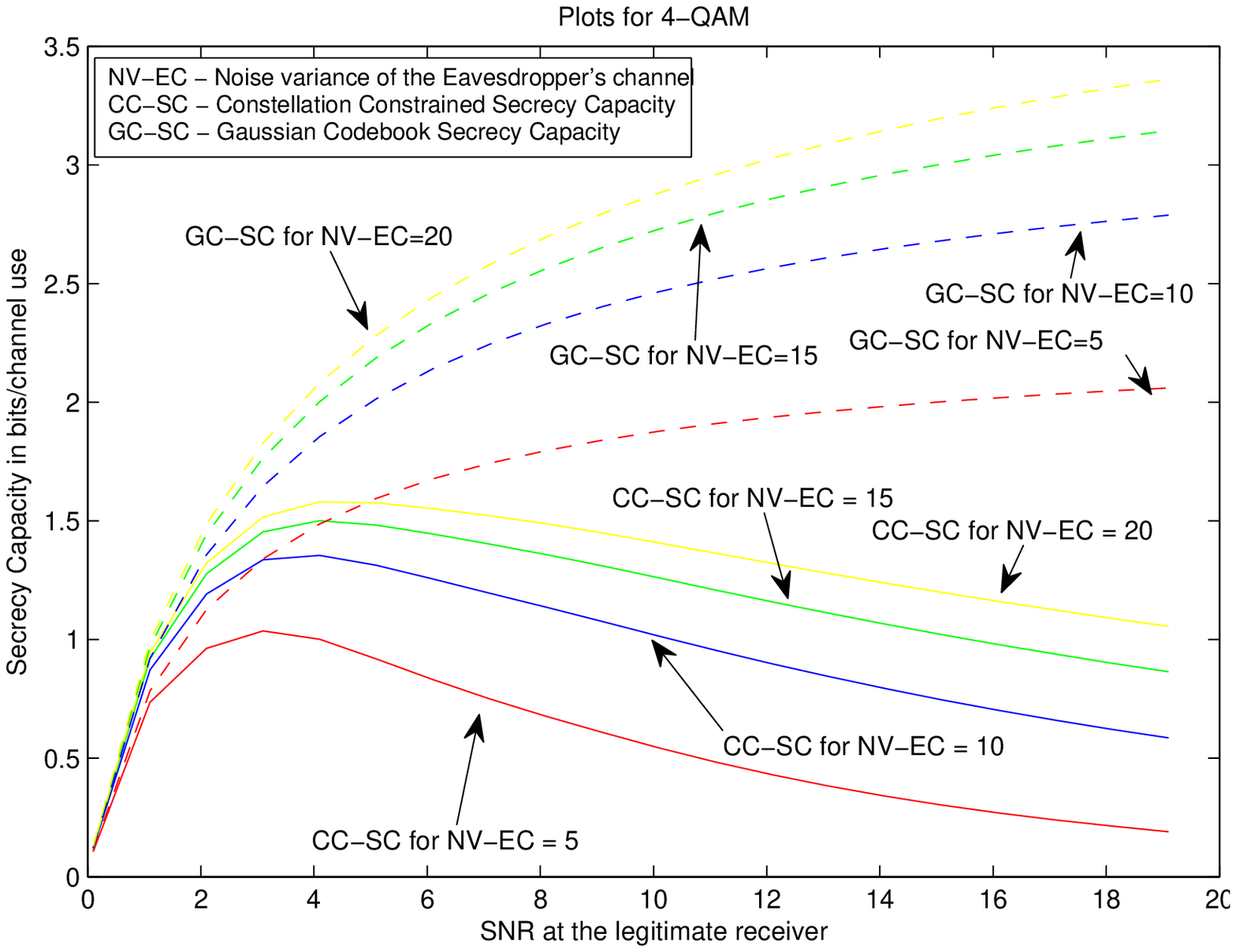}
\caption{CC Capacity curves when 4-QAM is used and for varying noise variances of the eavesdropper's channel. Curves are plotted for noise variances of 5, 10, 15 and 20.}	
\label{fig:4-QAM}	
\end{figure}

\begin{figure}[htbp]
\centering
\includegraphics[totalheight=3.1in,width=3.15in]{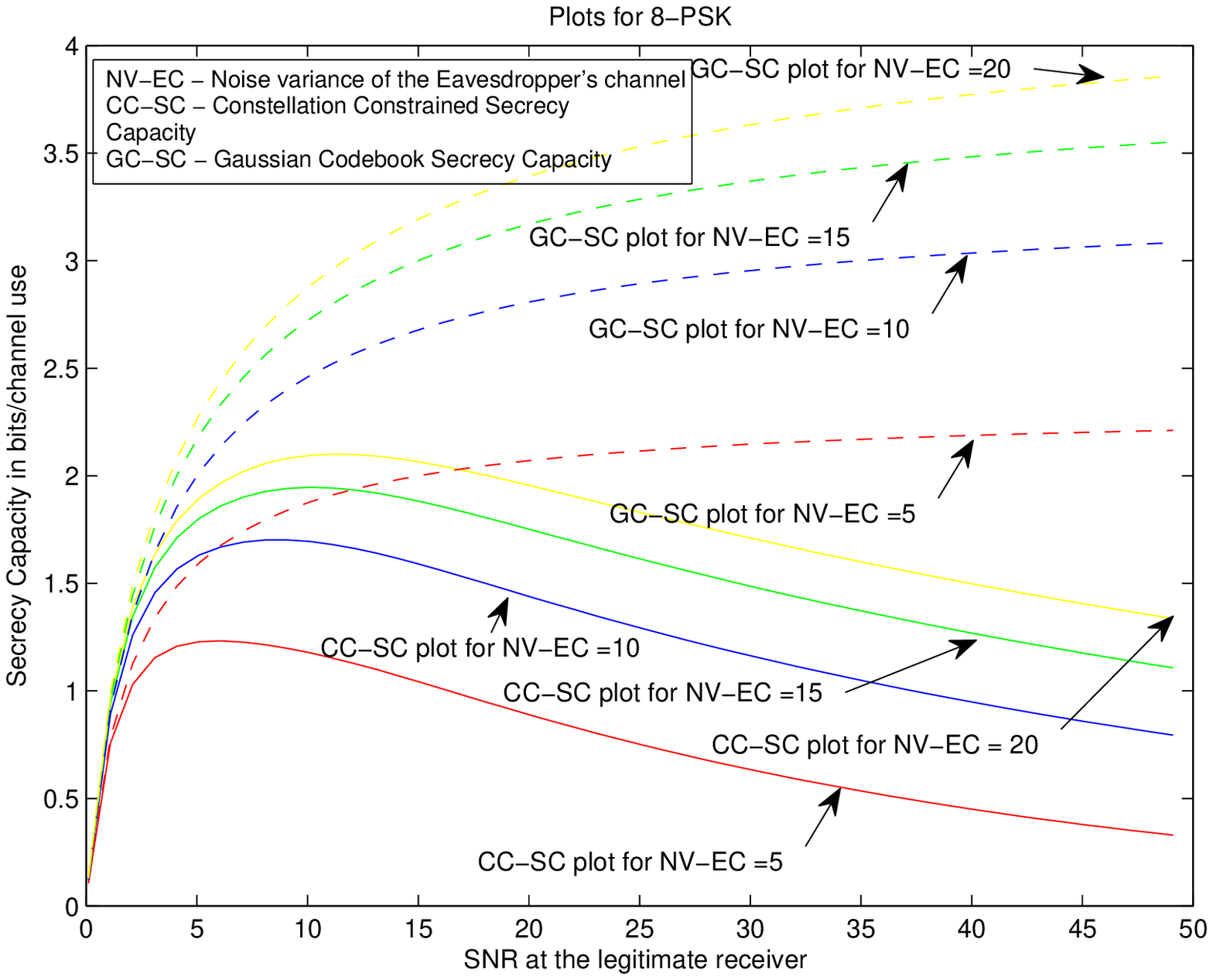}
\caption{CC Capacity curves when 8-PSK is used and for varying noise variances of the eavesdropper's channel. Curves are plotted for noise variances of 5, 10, 15 and 20.}	
\label{fig:8-PSK}	
\end{figure}

\begin{figure}[htbp]
\centering
\includegraphics[totalheight=3.1in,width=3.15in]{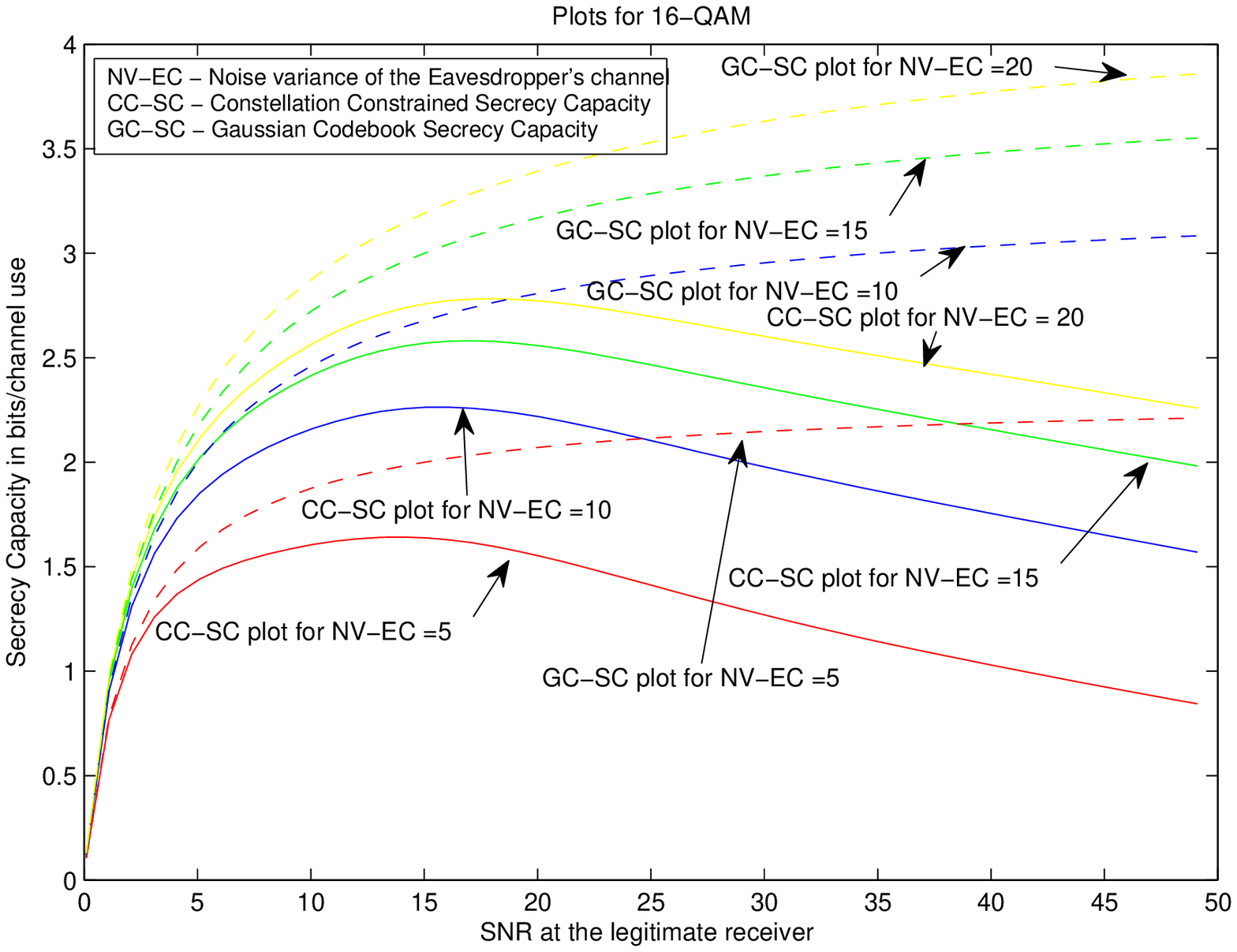}
\caption{CC Capacity curves when 16-QAM is used and for varying noise variances of the eavesdropper's channel. Curves are plotted for noise variances of 5, 10, 15 and 20.}	
\label{fig:16-qam}	
\end{figure}

\begin{figure}[htbp]
\centering
\includegraphics[totalheight=3in,width=3in]{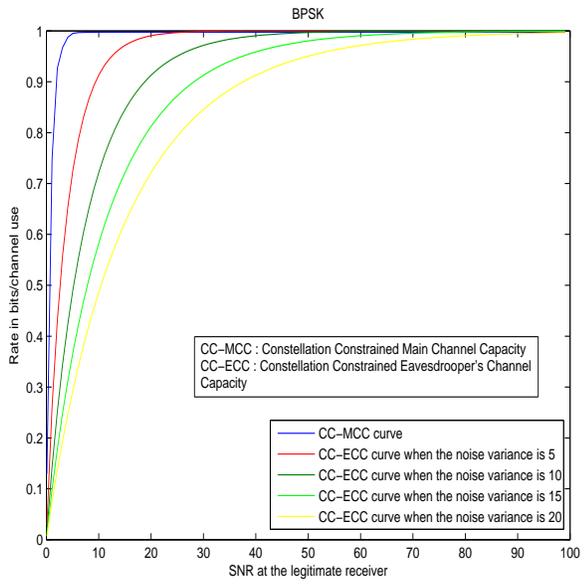}
\caption{\textit{CC} capacity plot of main channel and that of eavesdropper's channel for different $\sigma_2^2$ when BPSK is used.} 
\label{fig:bpsk_mi}	
\end{figure}

\begin{figure}[htbp]
\centering
\includegraphics[totalheight=3in,width=3in]{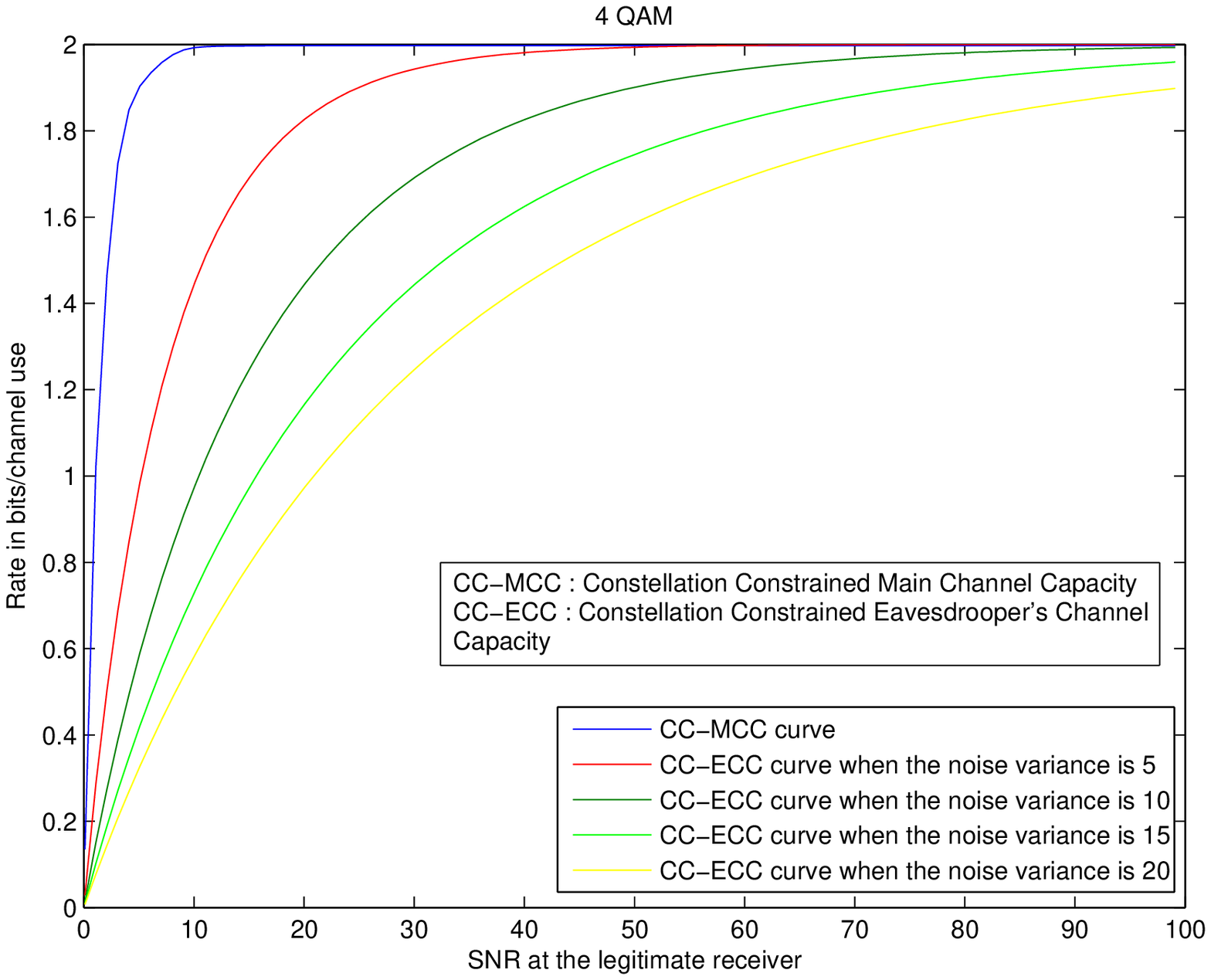}
\caption{\textit{CC} capacity plot of main channel and that of eavesdropper's channel for different $\sigma_2^2$ when 4-QAM constellation is used.}	
\label{fig:4-qam-mi}	
\end{figure}

\begin{figure}[htbp]
\centering
\includegraphics[totalheight=3in,width=3in]{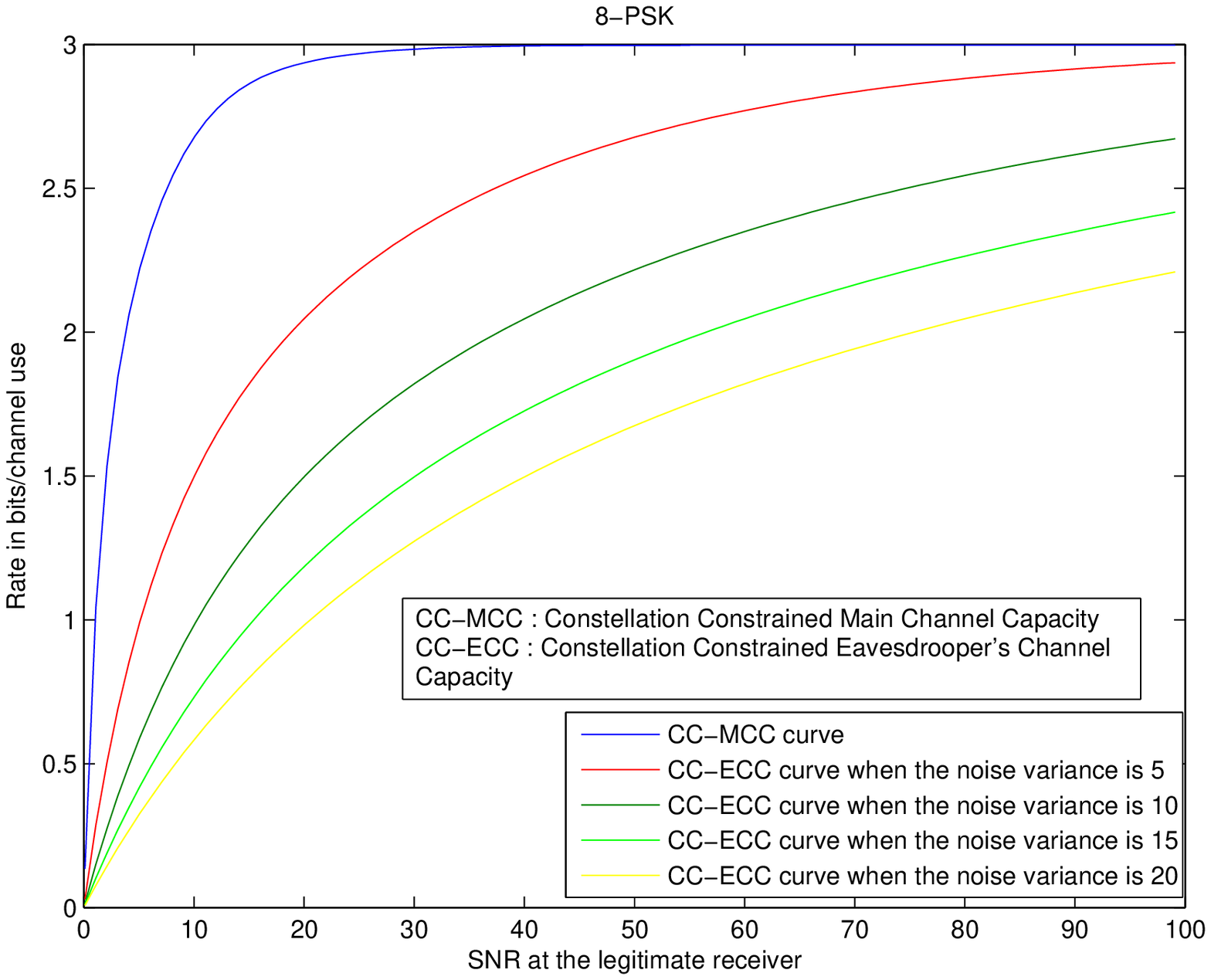}
\caption{\textit{CC} capacity plot of main channel and that of eavesdropper's channel for different $\sigma_2^2$ when 8-PSK constellation is used.}	
\label{fig:8-psk-mi}	
\end{figure}

\begin{figure}[htbp]
\centering
\includegraphics[totalheight=3in,width=3in]{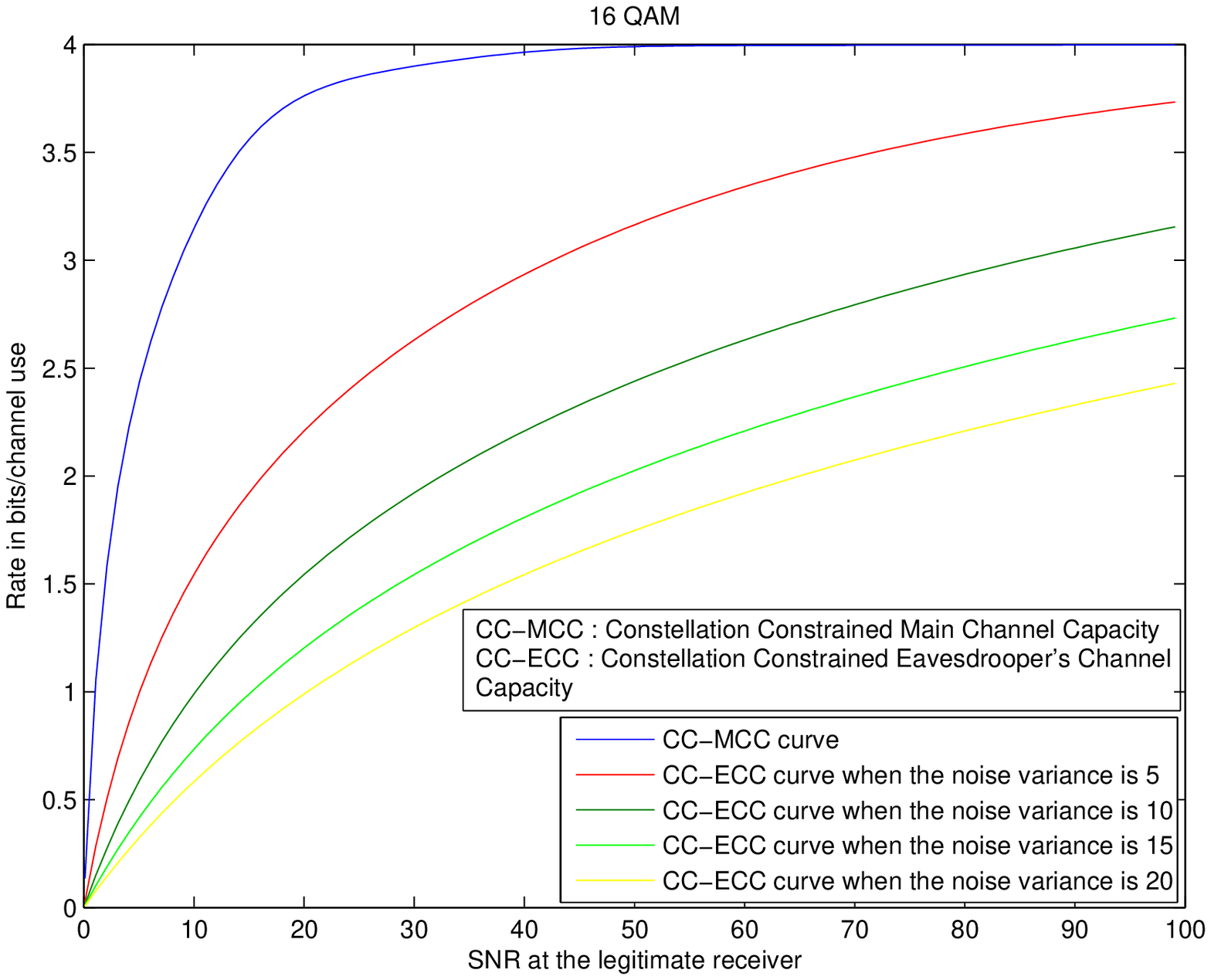}
\caption{\textit{CC} capacity plot of main channel and that of eavesdropper's channel for different $\sigma_2^2$ when 16-QAM constellation is used.}	
\label{fig:16-qam-mi}	
\end{figure}

\begin{figure}[htbp]
\centering
\includegraphics[totalheight=3in,width=3in]{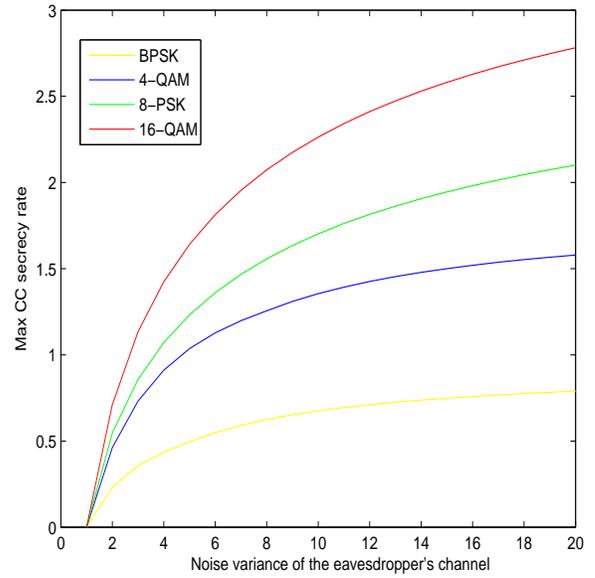}
\caption{Max CC Secrecy rate curves showing the maximum secrecy rate achieved as a function of the noise variance of the eavesdropper's channel.}	
\label{fig:maxrate}	
\end{figure}

\begin{figure}[htbp]
\centering
\includegraphics[totalheight=3in,width=3in]{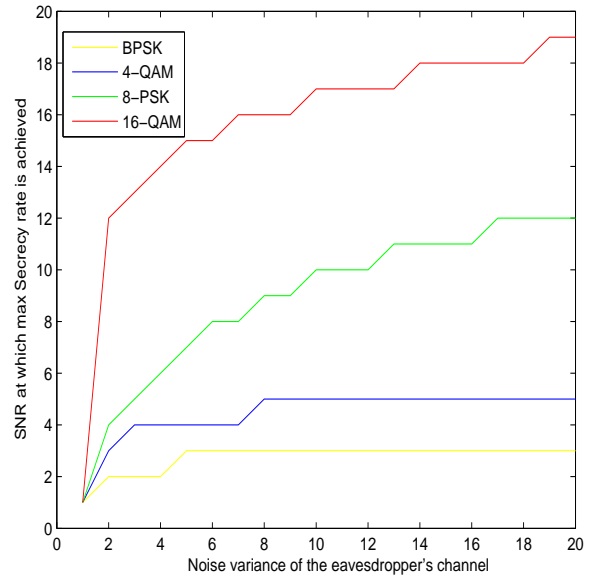}
\caption{$SNR$ curves showing the $SNR$ at which the maximum in the CC secrecy capacity is achieved as a function of the noise variance of the eavesdropper's channel.}	
\label{fig:SNR_at_max}	
\end{figure}

\begin{figure*}[htbp]
\centering
\includegraphics[totalheight=5in,width=5in]{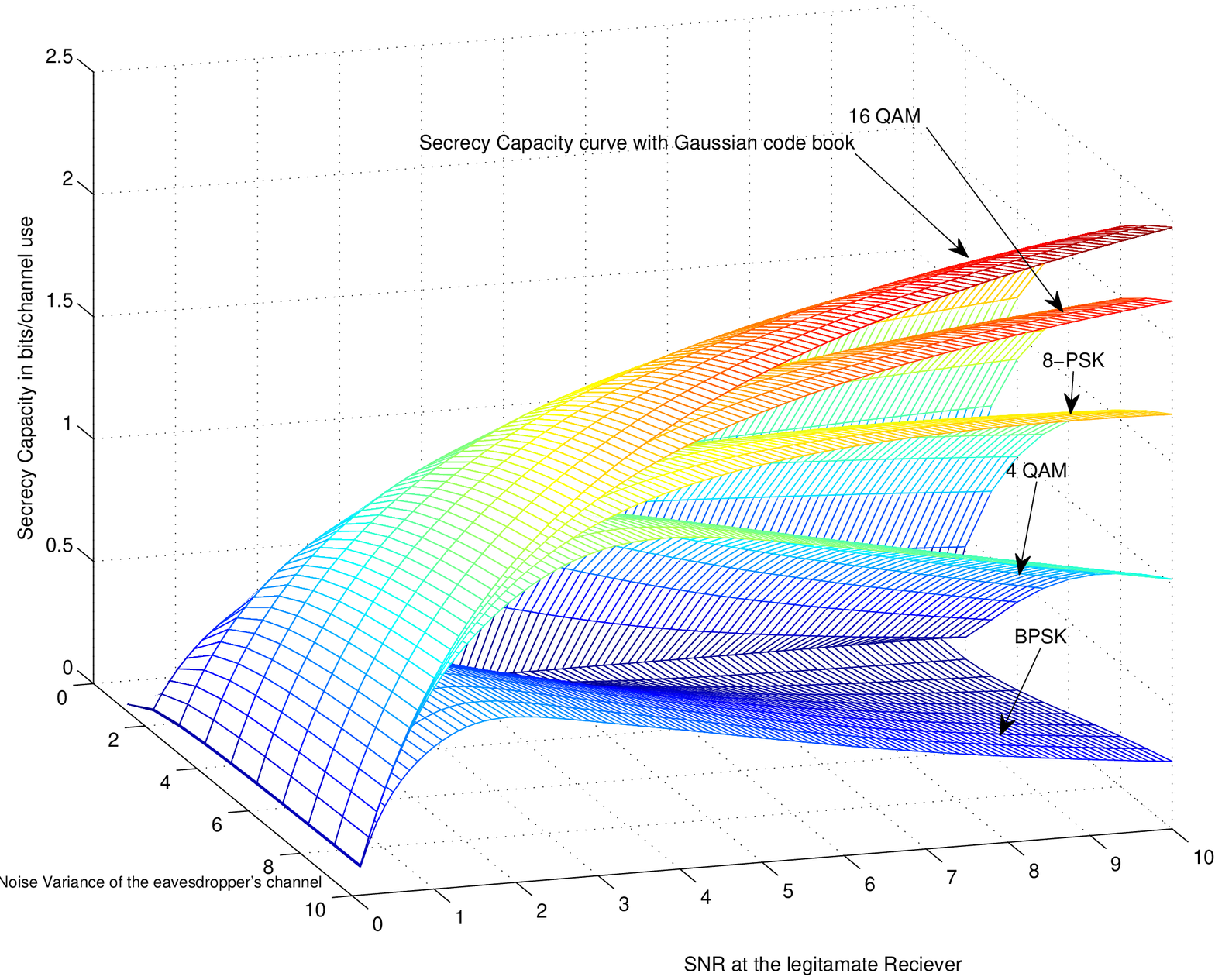}
\caption{Combined 3D-plot showing the \textit{CC-SC} curves for BPSK, 4-QAM, 8-PSK, 16-QAM.}	
\label{fig:Combined_Capacity_plots}	
\end{figure*}

All the figures in this section are plotted by fixing the noise variance of the main channel to 1 and increasing the average power of the constellation used continuously, to vary the $SNR$. Since the assumption is that the eavesdropper's channel must be more noisy compared to the main channel, $\sigma_2^2$ is $>1$ in all the figures. In each of Fig.\ref{fig:BPSK}, Fig.\ref{fig:4-QAM}, Fig.\ref{fig:8-PSK} and Fig.\ref{fig:16-qam} we fix the constellation and plot the curves by varying $\sigma_2^2$. Observe that the \textit{CC-SC} curves for different constellations shows up a maximum for some finite $SNR$ and monotonically decreases thereafter, before converging to zero at infinity. Also observe the behavior of the secrecy capacity curves with Gaussian codebook which continuously increases as the $SNR$ increases and converges to $\log_2{\frac{\sigma_2^2}{\sigma_1^2}}$ at infinity.

Fig.\ref{fig:bpsk_mi}, Fig.\ref{fig:4-qam-mi}, Fig.\ref{fig:8-psk-mi} and Fig.\ref{fig:16-qam-mi} are the plots showing the individual \textit{CC} channel capacities of the main channel and the eavesdropper's channel for various $\sigma_2^2$. From (\ref{eqnset3g}) we know that the \textit{CC-SC} is the difference between individual \textit{CC} capacities of the main channel and the eavesdropper's channel. Observe that CC capacity curves of the main channel and eavesdropper's channel start at $0$ and converge at infinity. As $\sigma_2^2$ increases the \textit{CC} capacity curve of the eavesdropper's channel rises more and more slowly and eventually converge to $\log_2 M$. It can also be observed that the gap between the \textit{CC} capacities of the main channel and eavesdropper's channel, which is also the \textit{CC-SC}, increases slowly, maximizes at some point and decreases thereafter. Based on this observation, we conjecture that the \textit{CC-SC} curves for various constellations have a single maximum. Gaussian codebook secrecy capacity plots show that not much is gained by increasing the power, but in fact from the \textit{CC-SC} plots we show that, just by increasing the power, for any fixed constellation, $\mathcal{C}_s^{cc}$, which is the maximum achievable secrecy rate, takes a hit as $SNR$ increases beyond the point of maximum.

Fig.(\ref{fig:maxrate}) is a plot showing the maximum achievable secrecy rate (Rate corresponding to the $SNR$ at which the maximum occurs), for different constellations, by varying $\sigma_2^2$. As it can be seen from the plots, maximum achievable rate monotonically increases and converges to the SISO Gaussian capacity, as $\sigma_2^2$ tends to infinity, at infinity. This is because as $\sigma_2^2$ tends to $\infty$, the WTC looks more like a SISO Gaussian channel and the maximum rate achievable in the SISO case is infinity.

Fig.(\ref{fig:SNR_at_max}) is a plot showing the $SNR$ at which the maximum occurs for different constellations, by varying $\sigma_2^2$. As can be seen from the plots since the $SNR$ at which the maximum occurs remains constant for large deviation in $\sigma_2^2$, exactly knowing $\sigma_2^2$ to operate at maximum is mitigated. But it is also true that the $SNR$ at which the maximum occurs increases continuously and tends to infinity as the noise variance tends to infinity. This is due to the fact that, as the noise variance tends to infinity, the WTC looks more like a SISO Gaussian channel with maximum rate occurring at $SNR$ $=\infty$.

Fig.(\ref{fig:Combined_Capacity_plots}) is a combined 3-D \textit{CC-SC} plots for BPSK, 4-QAM, 8-PSK, 16-QAM, plotted by varying the $SNR$ at the legitimate receiver and by varying $\sigma_2^2$. This plot combines the effect of changing $SNR$ as well as changing $\sigma_2^2$ on the \textit{CC-SC}. As can be seen from the plots, though the Gaussian codebook secrecy capacity plot continuously increases for increasing $SNR$ and increasing $\sigma_2^2$, the \textit{CC-SC} for all the constellations shows that as $\sigma_2^2$ increases, $SNR$ must be increased at a controlled rate, to be operating at the maximum. Either fixing the $SNR$ to one particular value or an uncontrolled change with respect to $\sigma_2^2$ makes \textit{CC-SC} to decrease continuously        

\section{Conclusion}
We have shown that the \textit{CC-SC} curve for a Gaussian WTC plotted against the $SNR$ of the main channel, for a fixed noise variance of the eavesdropper's channel has a global maximum. We also conjectured from the plots that the curves may have a single maximum. This result shows that when designing practical codes for a Gaussian WTC, for any chosen constellation, increasing the power beyond the maximum point is harmful as the secrecy capacity curve dips continuously thereafter. This also shows that, when designing a coding scheme, using some finite complex constellation, constrained by power $P_o$, the $SNR$ corresponding to $P_o$ if is greater than the $SNR_{max}$, it makes sense to reduce the power and operate at $SNR_{max}$. Thus we feel the result is very important more so because the capacity plots with Gaussian codebook fails to provide this information. The plots, we feel, might also be used to compare the performances of some of the existing coding schemes like the one based on LDPC in \cite{KHM}. In the case of LDPC based code construction given in \cite{KHM}, information about the maximum possible reduction in the security gap can be obtained. Thus our result can be a guideline for practical code constructions.    
\section{Acknowledgment}
This work was partly supported by the DRDO-IISc program
on Advanced Research in Mathematical Engineering, through
a research grant and by the INAE Chair Professorship grant to B.S. Rajan.


\begin{thebibliography}{160}
 \bibitem{Sha}
C. E. Shannon,
"Communication Theory of Secrecy Systems", Bell Systems Technical Journal,
Vol. 28, pp. 656-715, 1949.

\bibitem{Wyn}
A. Wyner, "The wiretap channel", Bell Syst Tech. J., vol. 54, pp. 1355-1387,
Oct. 1975.

\bibitem{LH}
S. K. Leung-Yan-Cheong and M. E. Hellman, "Gaussian wire-tap channel", IEEE Trans. Inform. Theory, vol. 24, no. 4, pp. 451-456, Jul 1978.

\bibitem{Oo}
Y. Oohama, "Coding for relay channels with confidential messages", in Proc. IEEE Information Theory Workshop, Cairns, Australia, Sep. 2001, pp. 87-89.

\bibitem{CK}
I. Csisz´ar and J. K\"{o}rner, "Broadcast Channels with confidential messages", IEEE Trans. Inform. Theory, vol. 24, no. 3, pp.339-348, May 1978.

\bibitem{LPS}
Ruoheng Liu, Tie Liu, H. Vincent Poor, and Shlomo Shamai (Shitz), "MIMO Gaussian broadcast channels with confidential messages" ISIT 2009, Seoul, Korea, June 28 - July 3, 2009. 

\bibitem{LMYS}
R. Liu, I. Maric, R. D. Yates, and P. Spasojevic, "Discrete memoryless interference and broadcast channels with confidential messages", in Proc. Allerton Conf. Commun., Contr., Comput., Monticello, IL, Sep 27-29, 2006.

\bibitem{Wei}
V. K. Wei, "Generalized Hamming Weights for Linear Codes", IEEE Trans. on Info. Theory, vol. 37, no. 5, pp. 1412-1418, Sept. 1991.

\bibitem{TDC}
A. Thangaraj, S. Dihidar, A. R. Calderbank, S. W. McLaughlin, and J.-M. Merolla,
"Applications of LDPC codes to the wiretap channel", IEEE Trans. Inf. Theory, vol. 53, pp. 2933-2945, Aug. 2007.

\bibitem{LLPS}
R. Liu, Y. Liand, H. V. Poor, and P. Spasojevi´c, "Secure nested codes for type II wiretap channels", in Proc. 2007 IEEE Information Theory Workshop, (Lake Tahoe,
California, USA), September 2007.

\bibitem{KHM}
D. Klinc, J. Ha, S. McLaughlin, J. Barros, and B. Kwak, "LDPC codes for
the Gaussian wiretap channel", in Proc. Information Theory Workshop,
October 2009.

\bibitem{BO}
J.-C. Belfiore and F. Oggier, "Secrecy gain: a wiretap
lattice code design", ISITA 2010, 2010.

\bibitem{Big}
Ezio Biglieri, Coding for wireless channels, Springer-Verlag New York, Inc, 2005.

\bibitem{Nav}
Naveen Deshpande and B. Sundar Rajan, "Constellation Constrained
Capacity of Two-user Broadcast Channels", IEEE Global Telecommunications
Conference (GLOBECOM) 2009, Nov. 2009, pp. 1-6.

\bibitem{Har}
J. Harshan and B. Sundar Rajan, "Finite Signal-set Capacity of Two user
Gaussian Multiple Access Channel", Proc. of ISIT 2008, Toronto,
Canada, July 06-11, 2008, pp. 1203-1207.

\bibitem{FA}
Frederic Knabe and Aydin Sezgin, "Achievable Rates in Two-user Interference Channels with Finite Inputs and (Very) Strong Interference", http://arxiv.org/abs/1008.3035.

\bibitem{TT}
Terence Tao, Analysis I, \textit{Second edition, Hindustan Book Agency INC 2009}.

\end{thebibliography}
\end{document}